
\documentclass{amsart}

\usepackage{amsaddr}




\newtheorem{thm}{Theorem}[section]


\theoremstyle{definition}
\newtheorem{definition}[thm]{Definition}


\theoremstyle{remark}

\numberwithin{equation}{section}





\begin{document}


\title{Computing in the Limit}

\author{Antony Van der Mude}
\address{32 Second Avenue \#332, Burlington MA, 01803}
\address{1-908-343-1334}
\address{vandermude@acm.org}

\begin{abstract}
We define a class of functions termed ``Computable in the Limit'', based on the Machine Learning paradigm of ``Identification in the Limit''. A function is Computable in the Limit if it defines a property $P_p$ of a recursively enumerable class $A$ of recursively enumerable data sequences $\vec{S}\in A$, such that each data sequence $\vec{S}$ is generated by a total recursive function $\phi_s$ that enumerates $\vec{S}$. Let the index $s$ represent the data sequence $\vec{S}$. The property $P_p(s)=x$ is computed by a partial recursive function $\phi_p(s,t)$ such that there exists a $u$ where $\phi_p(s,u)=x$ and for all $t\geq u$, $\phi_p(s,t)=x$ if it converges. Since the index $s$ is known, this is not an identification problem - instead it is computing a common property of the sequences in $A$. We give a Normal Form Theorem for properties that are Computable in the Limit, similar to Kleene's Normal Form Theorem. We also give some examples of sets that are Computable in the Limit, and derive some properties of Canonical and Complexity Bound Enumerations of classes of total functions, and show that no full enumeration of all indices of Turing machines $TM_i$ that compute a given total function $f(x)$ can be Computable in the Limit.

\textbf{Index Terms: Learning, mu Theorem, Computable}
\end{abstract}

\maketitle

\section{Introduction}

Computing in the Limit started with the pioneering work in Machine Learning of Ray Solomonoff\cite{solomonoff1964formal1}\cite{solomonoff1964formal2} and Mark Gold\cite{gold1967language}. They developed some initial results on a type of Machine Learning called Identification in the Limit. Identification in the Limit is a learning task where the learner is presented with a sequence of elements from a recursively enumerable set and is given the task to determine an index $i$ for the function $\phi_i$ whose domain is the set. Besides its applications in learning, Gold\cite{gold1965limiting} considered the problem of identification of sets in the limit (Limiting Recursive Sets), and Kolmogorov based his complexity theory on work by Ray Solomonoff.

One basic result from Kolmogorov Complexity is the function $K(x)=y$, where Turing machine $TM_y$ is the machine with the shortest index $y$ that, starting with an empty tape, halts with $x$ on its tape. The Incompressible Numbers are numbers where, essentially, $K(x)=|x|+c$. This happens when there is no shorter way of generating these numbers except by storing a copy of the number $x$ in the state table of the Turing machine, which is just the size of $x$ plus a constant $c$. These values are known to be immune - no infinite recursively enumerable set is a subset of the Incompressible Numbers.

In this paper, we will consider a more general problem. We cosider the class of functions$f_i(x)$ based on all two-input recursive functions $\phi_i(x,y)$ where $f_i(x)=z$ if $\phi_i(x,y)$ converges to $z$ for increasing values of $y$, and try to discover some properties of of the functions in this class. We will show that the class of Computable in the Limit functions has a nice normal form theorem to characterize the classs, similar to the Kleene Normal Form Theorem. 

The authors Gold\cite{gold1965limiting}, Stephan and Zeugmann\cite{stephan1999uniform} and Terwijn\cite{terwijn2002learning} have studied this class of functions. Gold shows that functions in this class that are are limiting recursive ($P_S(x)=1$ if $x\in S$, otherwise $P_S(x)=0$) is 2-recursive and limiting recursively enumerable functions ($P_S(x)$ converges iff $x\in S$) is 2-r.e. The term 2-recursive means that the functions are in EA and AE in the arithmetical hierarchy, and 2-r.e. is EA. Stephan and Zeugmann showed that although problems like the Halting Problem and other function classes are learnable in the limit, some ae not reliably learned. Terwijn extends these results by considering learning algorithms where zero, one or two mind changes are allowed and compares them to a function with access to an oracle for the set to be learned.

Just as in the class of computatble functions, there are functions with propertiers that are similar that fall into this class. We will explore some properties that are Computable in the Limit and some that are not.

\section{Definitions}

Using the notion of Rogers\cite{rogers1967} we define the Tau function as $$\tau(x,y)=\frac{1}{2}(x^2+2xy+y^2+3x+y)$$
Where $$\tau^1(x)=x$$ $$\tau^2(x,y)=\tau(x,y)$$ $$\tau^{k+1}(x_1,...,x_{k+1})=\tau(\tau^k(x_1,...,x_k),x_{k+1})$$
and $\pi_i^k(x)$ is the inverse function for the $i^{th}$ element of $\tau^k$.

Without loss of generality, an enumeration of the partial recursive functions is given as $\phi_i$ and the primitive recursive functions is given as $F_i$. A function with two arguments $\phi_i(x,i)$ is simply $\phi_i(\tau(x,y))$ and similarly for $F_i(x,y)$.

We begin with the definition of Computing in the Limit. This is similar to Gold's definition of a Limiting Recursive Set.

\begin{definition}
A function $f_i(x)$ is {\bf Computable in the Limit}, where there exists a corresponding partial recursive function $\phi_i(x, y)$ such that if $f_i(x)=z$ then there exists a $s$ such that for all $t>s$ if $\phi_i(x,t )$ converges, then $\phi_i(x,t )=z$. If this happens, we say {\bf$f_i$ Converges on $x$ to value $z$ in the Limit, or \emph{l-}converges to $z$ on $x$}.

The function {\bf $f_i$ Diverges in the Limit on x} if for all $s$ where $\phi_i(x, s)=v$ then there exists a $t$ such that $\phi_i(x,t )=w$ and $v\neq w$. If this happens, we say {\bf$f_i$ Diverges on value $x$ in the Limit, or \emph{l-}diverges on $x$}.

If the function $f_i$ \emph{l-}diverges on some input $x$ then $f_i$ is {\bf\emph{l-}Partial} .Otherwise, it is {\bf \emph{l-}Total}.
\end{definition}

For example, the set $<x,y>$ of all shortest algorithmic descriptions of Kolmogorov complexity is expressed by such a function $f_K$. $f_K$ takes the input $x$ and generates $y$ if $y$ is the smallest integer such that $\phi_y(0)=x$. This requires a dovetail function $\phi_Ki(x, t)$ that runs all $s\leq t$ where $\phi_s(0)=x$ in time less than $t$ and for all $r<s$ either $\phi_r(0)=z \neq x$ or $\phi_r(0)$ does not converge in time $t$. If no such $\phi_s(0)=x$ is found, return the value $t$.

\section{The Normal Form Theorem}

The notation $f_i(x)$ for properties Computable in the Limit looks similar to the standard definition of a partial recursive function $\phi_i(s)$, except that partial recursive functions are defined using the primitive recursive functions $\Phi_i(x,t)$. Actually we shall see that the only difference is the terminating condition. This allows us to construct a Normal Form Theorem that is similar to Kleene's Normal Form theorem except that instead of using the mu function $\mu(x)$ to find the first element where a Boolean predicate is true, we use a new function $\lambda(x)$ to find the last of a finite number of guesses.

\begin{definition}
The function $\mu(x)[...x...]$ is the least integer $x$ such that the expression $...x...$ is true when ``$x$'' is interpreted as the integer $x$, if $...x...$ is true at least one point\cite{rogers1967}.
\end{definition}

\begin{definition}
The function $\lambda(x)[...x...]$ is the largest integer $x$ such that the expression $...x...$ is true when ``$x$'' is interpreted as the integer $x$, if $...x...$ is true at a finite number of points.
\end{definition}

Kleene's Normal Form Theorem is given as follows.

\begin{thm}
{\bf Kleene's Normal Form Theorem}: There exists a primitive recursive function $U(z)$ and a primitive recursive predicate $T(e,x,y)$ such that every function $f(x)$ is effectively computable iff $f(x)=\phi_e(x)=U(\mu y T(e,x,y))$.
\end{thm}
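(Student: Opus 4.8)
The plan is to prove the theorem by arithmetizing Turing machine computations, following the classical route (cf.\ Rogers). First I would fix a standard Gödel coding of Turing machine configurations — a configuration being a triple consisting of the current state, the head position, and the finite tape contents — as a single natural number, built from the pairing functions $\tau^k$ introduced above; and I would code a finite run $c_0,c_1,\dots,c_m$ of such configurations as one number $y = \tau^{m+2}(m,c_0,\dots,c_m)$. With this fixed, define the predicate $T(e,x,y)$ to hold exactly when $y$ decodes to a sequence $c_0,\dots,c_m$ such that $c_0$ is the initial configuration of $TM_e$ on input $x$, each $c_{j+1}$ is the one-step successor of $c_j$ under the transition table coded in $e$, and $c_m$ is a halting configuration; and define $U(y)$ to decode $y$ to its last configuration and read off the output written on the tape there.

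Next I would check that $T$ is primitive recursive. Decoding $y$ through the inverses $\pi_i^k$ is primitive recursive, so $m$ and each $c_j$ are available as primitive recursive functions of $y$. The clause ``$c_0$ is the initial configuration of $TM_e$ on $x$'' compares only bounded pieces of $e$, $x$, and $c_0$. The clause ``$c_{j+1}$ follows from $c_j$ in one step under machine $e$'' reduces to looking up the finite entry of the transition table $e$ indexed by the (state, scanned symbol) pair extracted from $c_j$, and verifying that $c_{j+1}$ differs from $c_j$ only in the bounded way (one symbol rewritten, head shifted by one cell, state changed) prescribed by that entry — a bounded search over data of size at most $y$, hence primitive recursive. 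Checking that $c_m$ is a halting configuration is likewise a bounded lookup. A bounded conjunction (over $j<m$) of primitive recursive predicates is primitive recursive, so $T$ is primitive recursive, and the same reasoning makes $U$ primitive recursive.

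Then I would verify the ``iff''. For one direction, suppose $f$ is computed by $TM_e$, so $f(x)=\phi_e(x)$. If $x\in\mathrm{dom}(f)$, the computation of $TM_e$ on $x$ is a finite halting run of configurations coded by some $y_0$, so $\{y : T(e,x,y)\}$ is nonempty, $\mu y\,T(e,x,y)$ converges to its least element, and the last configuration of that element carries the value $f(x)$; hence $U(\mu y\,T(e,x,y))=f(x)$. If $x\notin\mathrm{dom}(f)$, there is no halting run, so $\{y : T(e,x,y)\}=\emptyset$ and both sides diverge; thus $\phi_e(x)=U(\mu y\,T(e,x,y))$ as partial functions. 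For the converse, for any index $e$ the expression $U(\mu y\,T(e,x,y))$ is the composition of the primitive recursive function $U$, one unbounded minimization, and the primitive recursive predicate $T$, hence is partial recursive and so effectively computable in $x$.

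The main obstacle is concentrated entirely in the second step: committing to one concrete coding of configurations and runs and then checking, against that coding, that the one-step successor relation is primitive recursive. Conceptually this is routine, but it is the only place where genuine work is needed rather than composition of already-available primitive recursive operations; everything else amounts to matching the $\mu$-operator to ``the computation halts'' and $U$ to ``read off the answer.''
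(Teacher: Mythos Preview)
Your sketch is the standard arithmetization argument and is correct. Note, however, that the paper does not actually \emph{prove} this theorem: it is quoted as a classical result, and the paragraph immediately following it merely describes $T(e,x,y)$ as encoding in $y$ the computation history of $TM_e$ on input $x$, being true when $y$ is a halting sequence, and $U$ as recovering the output from $y$. Your proposal is precisely a fleshed-out version of that one-sentence description, carried out via the pairing functions $\tau^k$ the paper already set up, so there is no divergence in approach to discuss.

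One small point worth flagging: the paper additionally stipulates that if $T(e,x,y)$ holds then it also holds for any $z>y$ obtained by appending extra copies of the last tape configuration to $y$. Your construction as stated does not obviously have this property (you require $c_m$ to be halting but do not explicitly allow padding after the halt); since the paper uses this padding property in the proof of the next theorem, you would want to relax the ``$c_{j+1}$ is the one-step successor of $c_j$'' clause to also permit $c_{j+1}=c_j$ once a halting state has been reached. This is a trivial fix and still primitive recursive.
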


The Boolean function $T(e,x,y)$ encodes in the variable $y$ the computation history of Turing machine $TM_e$ on input $x$. The predicate $T(e,x,y)$ returns true if $y$ is a halting sequence. The function $U(z)$ recovers the output from the computation history $y$. We shall assume that if $T(e,x,y)$ is true then for all $z>y$ where $z$ is $y$ with one or more copies of the last tape configuration of $y$ appended to $y$, then $T(e,x,z)$ is true also.

\begin{thm}
{\bf Computing in the Limit Normal Form}: There exists a primitive recursive function $U(z)$ and a primitive recursive predicate $T'(e,x,y)$ such that every function $f(x)$ is Computable in the Limit iff $f(x)=U(\lambda y T'(p,x,y))$.
\end{thm}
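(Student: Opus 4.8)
The plan is to mimic the structure of Kleene's Normal Form Theorem, replacing the single halting search $\mu y\,T(e,x,y)$ with a search $\lambda y\,T'(e,x,y)$ for the last ``stable-change'' point of the limiting computation. First I would fix notation: by the final reduction in the excerpt we may assume $P_p(x)$ is over $A_Z$, so $P_p(x)$ is computed by $\phi_p(x,t)=\phi_{q}(\langle x,\dots,x\rangle)$, an ordinary partial recursive function of two arguments $x$ and $t$. Thus ``$P_p$ is Computable in the Limit'' just says: $\phi_p$ is partial recursive and there is a $u$ with $\phi_p(x,u)\!\downarrow\,=v$ and $\phi_p(x,t)\!\downarrow\,\Rightarrow\phi_p(x,t)=v$ for all $t\ge u$. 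The goal is to capture this ``eventually constant where defined'' behavior in a primitive recursive predicate.

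The key idea is to encode, in the single variable $y$, a bounded simulation: let $y$ code a stage bound $n$ and a halting computation history $h$ of $TM_p$ on input $\langle x,t\rangle$ for some $t\le n$, and declare $T'(p,x,y)$ true exactly when (i) $h$ is a genuine halting computation of $TM_p(x,t)$ completing within $n$ steps, and (ii) among all $t'\le n$ for which $TM_p(x,t')$ halts within $n$ steps, the output coded by $h$ is the one belonging to the largest such $t'$ — i.e.\ $h$ records the ``most recent guess visible by stage $n$.'' Then $U$ is the usual output-extraction primitive recursive function applied to $h$. Carrying this out: I would check that $T'$ is primitive recursive (it is a bounded quantifier combination of Kleene's $T$ and $U$, searching over $t'\le n$ and over computation lengths $\le n$), and then verify the two directions.

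For the forward direction, assume $P_p(x)=v$ via $\phi_p$ with stabilization point $u$. For all sufficiently large $n$ (large enough that some $t'\le n$ has $TM_p(x,t')$ halting within $n$ steps, and large enough that every such halting $TM_p(x,t')$ with $t'\ge u$ — and there is at least one, namely $t'=u$ — has output $v$), the $y$ coding $n$ together with the halting history at the largest such $t'$ satisfies $T'(p,x,y)$ and yields $U(\cdot)=v$; and crucially $T'(p,x,y)$ holds for \emph{only finitely many} $y$ only if... — and here is the subtlety — it will actually hold for infinitely many $y$, so I must instead arrange the encoding so that the $\lambda$-search is over a parameter that changes finitely often. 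The clean fix is to let $y$ range not over stage bounds but over the ``values'' themselves together with a witness: $T'(p,x,y)$ true iff $y=\tau^2(v,w)$ where $w$ witnesses that $v$ is output by $TM_p(x,t)$ for some $t$ and no larger-indexed $t'$ (up to a bound coded in $w$) gives a different later value; one then shows the set of such ``surviving'' $v$ is finite and $\lambda y$ picks the final one. For the reverse direction, given primitive recursive $U,T'$ I define $\phi_p(x,t)$ to search stage $t$ for the current best witness and output $U$ of it; since $T'$ is true at finitely many points, this $\phi_p$ l-converges, establishing $P_p(x)=U(\lambda y\,T'(p,x,y))$ is Computable in the Limit, and an $S^m_n$-style argument gives the index $p$ uniformly.

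The main obstacle I anticipate is exactly the tension flagged above: the naive ``bounded simulation up to stage $n$'' predicate is true for infinitely many codes $y$ (every large enough $n$ works), so a literal $\lambda y$ would diverge even when $P_p$ converges. The real work of the proof is designing the predicate $T'$ so that the \emph{set of $y$ making it true is finite precisely when $P_p$ l-converges}, while keeping $T'$ primitive recursive — this means $y$ must encode a self-certifying ``this is the last change'' claim checkable by bounded quantification, and I must prove such a certificate exists iff the limit exists. Handling the partiality of $\phi_p$ (it may diverge on infinitely many $t$ before and after $u$) inside a primitive recursive predicate is what forces the certificate to refer only to the finitely many $t'$ that halt within an explicitly bounded number of steps.
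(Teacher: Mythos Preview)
You correctly frame the problem, reduce to $\phi_p(x,t)$ over $A_Z$, and correctly diagnose the central obstacle: a naive stage-bounded simulation predicate is true infinitely often, so $\lambda$ would diverge. But your ``clean fix'' does not actually resolve this. You set $T'(p,x,y)$ true when $y=\tau^2(v,w)$ and $w$ witnesses that $v$ is output by $\phi_p(x,t)$ for some $t$, with no larger $t'$ (up to a bound coded in $w$) giving a different value. For an \emph{intermediate} guess $v$ this is indeed true for only finitely many $w$, since once the bound sees $v$ superseded the clause fails. But for the \emph{final} stable value $v$---the one you need $\lambda$ to return---no later $t'$ ever contradicts it, so every sufficiently large $w$ satisfies your predicate. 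Hence $T'$ is again true at infinitely many $y$, and $\lambda y\,T'(p,x,y)$ diverges precisely when $P_p$ \emph{l-}converges. A forward-looking ``not yet contradicted'' certificate cannot be made finite without an unbounded quantifier over future stages.

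The paper's construction looks \emph{backward} instead: $T'(p,\tau(x,n),y)$ holds iff $T(p,\tau(x,n),y)$ holds and $U(y)\neq U(z)$ for every $m<n$ and $z<y$ with $T(p,\tau(x,m),z)$ true. In other words, $T'$ flags a \emph{mind change}---a halting computation whose output differs from all earlier halting outputs seen so far. This is a bounded-quantifier combination of Kleene's $T$ and $U$, hence primitive recursive without any self-certifying ``this is the last change'' claim. Because $P_p$ \emph{l-}converges iff $\phi_p(x,\cdot)$ changes its mind only finitely often, $T'$ is true at only finitely many points exactly in that case, and (using the padding convention on $T$ stated in the paper) the largest such $y$ encodes the final guess, so $U(\lambda y\,T'(p,x,y))$ returns the limit value. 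The missing idea in your plan is this switch from ``$v$ has not yet been superseded'' (unboundedly many witnesses for the limit value) to ``$v$ is new relative to the past'' (one witness per change, automatically bounded).
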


\begin{proof}
By the definition of $P_p(x)$, there is a $\phi_p(x,n)$ that computes $P_p$. By definition, $\phi_p(x,n)=\phi_p(\langle x,n\rangle )=TM_p(\tau(x,n))$. By Kleene's Normal Form Theorem there is a $U$ and $T$ such that
$$\phi_p(x,n)=U(\mu y T(p,\tau(x,n),y))$$

Define $T'$ from $T$ as follows (for each $p$ and $x$):

$T'(p,\tau(x,n),y)$ is true iff $T(p,\tau(x,n),y)$ is true and for all $m<n$ and $z<y$ where $T(p,\tau(x,m),z)$ is true, then $U(y)\neq U(z)$.

Note that this condition does not use the $\mu$ function.

Assuming $T(p,x,y)$ and $U(x)$ are primitive recursive, $T'(p,\tau(x,n),y)$ is also.

If the following two conditions are true for some time $n$:
\begin{itemize}
\item Let $\{\langle i_1,y_1\rangle ,...,\langle i_n,y_n\rangle ,...\}$ be the set of all pairs such that for each $n$, $T(p,\tau(x,i_n),y_n)$ is true.
\item For all $m>n$ if $T(p,\tau(x,i_m),y_m)$ is true then $U(y_n)=U(y_m)$
\end{itemize}
then the following must be true:

If $y_i$ is the largest value in the set $\{y_0, ... y_n\}$, then $T(p,\tau(x,y_i),z)$ is true, where $z>y_i$ and $z$ is a configuration with one or more copies of the last tape configuration in $y_i$ appended to $z$.

So $U(z)$ is the last value where $TM_p$ changes its mind on the input sequence. Therefore, the last time $T'(p,x,y)$ is true is exactly this value $y=z$.

So $P_p(x)=U(\lambda y T'(p,x,y))$.
\end{proof}

Gold shows that limiting primitive recursive, limiting total recursive and limiting partial recursive functions are equally as powerful. This normal form theorem shows why this is so. The use of the lambda function with the primitive recursive functions are powerful enough to compute anything in this class. Since partial recursive functions can be computed with primitive recursive functions and the application of the mu function, the substituion of the mu function by the lambda function gives the same expressive power to a primitive recursive function as to a total or partial recursive function.

\section{Some Properties Computable in the Limit}

Here are a number of properties that are Computable in the Limit but not effectively computable. These are all pretty obvious, so I shall just state them.

Note that Computable in the Limit properties can be \emph{l-}partial functions, where for $P_p(x)$ there are either an infinite number of guesses or no guess at all. Some of the examples here are \emph{l-}total $P_p(x)$ functions. In all of these cases, the class of sequences is $A_Z$:

\begin{itemize}
\item The Kolmorogorov set $K(x)=P_K(x)=y$. The description was given in the Definitions as an example. $P_K$ is \emph{l-}total.

\item Kolmogorov incompressible numbers: $P_I(n)=x$, where $x$ is incompressible. Use $P_K(x)$ and discard any values $y$ that are compressible. When that happens, shift the guesses for $P_I(n)$ down. At each value $n$ there comes a time when all of the previous compressible values are found, and $P_I(n)$ is never shifted. $P_I$ is \emph{l-}total.

\item The set of all partial recursive functions $TM_x$ where $P_P(x)=\langle x,y\rangle $ only if $TM_x$ is partial and $y$ is the smallest value where $TM_x(y)$ diverges. The value $P_P(x,y)$ is computed as follows: for all $z\le y$ run $TM_x(z)$ for $y$ steps at most. If there is a value $z$ where $TM_x(z)$ has not converged in $y$ steps or less, output $\langle x,z\rangle $. If they all converge, output $\langle x,y\rangle $. If $TM_x$ is total, then if $\Phi_x(y)<y$ for all but a finite $y$, then $P_P(x,y)=\langle x,y\rangle $ for all but a finite number of cases and does not \emph{l-}converge. Otherwise, the output of $P_P(x,y)$ changes infinitely often as the smaller inputs converge. In this case, $P_P$ is \emph{l-}partial.

\item The set of all partial recursive functions $TM_x$ where for every $n$ there is a unique $x$ such that $P_Q(n)=x$. Use the same trick as above for the \emph{l-}total property of Kolmogorov incompressible numbers. In this case, we must keep track of both the function $TM_a$ and the value $P_P(a)=\langle a,b\rangle $ at which it diverges. Ordering the $\langle a,b\rangle $ pairs, $P_Q(n)$ outputs the value $c$, where $\langle c,d\rangle $ is the $n^{th}$ smallest pair. This takes \emph{l-}partial $P_P$ and turns it into \emph{l-}total $P_Q$.

\item The set of all partial functions with finite domain. Instead of keeping track of $\langle a,b\rangle $ where $TM_a(b)$ diverges, keep track of $\langle a,\langle b_0,...,b_n\rangle \rangle $ where $TM_a(b_i)$ converges, for each $i\leq n$.

\item The minimum index for finite sets - similar to the Kolmorogorov set.

\item The set of all functions $TM_i$ with a single element in its domain.

\item Given an enumeration of indexes of Polynomial functions $A$ and enumeration of indexes of NP functions $B$, the Boolean property over pairs $\langle i,j\rangle $ where $P_{eq}(\langle i,j\rangle )=1$ iff $i\in A$, $j\in B$ and $\phi_i=\phi_j$. Otherwise $P_{eq}(\langle i,j\rangle )=0$. If $i\notin A$ or $j\notin B$, then $P_{eq}(\langle i,j\rangle )$ diverges.

\item The property $P_{exp}(i)=\langle i,e\rangle $ where $\phi_i$ has polynomial complexity with exponent $e$. If $\phi_i$ is not polynomial, then $P_{exp}(i)$ \emph{l-}diverges.
\end{itemize}

The Polynomial - NP property can be generalized to any pairs of recursively enumerable classes of total functions. We will present two types of generalizations. Let the class of total functions $A$ be enumerated by a total function $\phi_a$ whose range is indexes of total functions in the set $A$, and similarly for $B$ and $\phi_b$. A theorem by Blum and Blum on the extrapolation of total recursive functions uses the concept of an h-easy function. They show that each recursively enumerable class of functions is bounded, up to a finite number of exceptions, by the computational complexity of a total function $h$.\cite{blum1975toward}

\begin{definition}
Let $h$ be a total recursive function. A partial recursive function $\phi_i$ is {\bf h-easy} if $\Phi(x)\leq h(x)$ for all but a finite number of integers $x$.
\end{definition}

This definition differs slightly from Blum and Blum's in that $\phi_i$ can diverge in a finite number of cases.

\begin{thm}
For any two total recursive functions $g$ and $h$, $P_{eq}$ is Computable in the Limit, where if $\phi_i$ is g-easy and $\phi_j$ is h-easy then $P_{eq}(\langle i,j\rangle )=1$ if $\phi_i=\phi_j$ and $P_{eq}(\langle i,j\rangle )=0$ if $\phi_i\neq\phi_j$ and $P_{eq}(\langle i,j\rangle )$ diverges if either $\phi_i$ is not g-easy or $\phi_j$ is not h-easy.
\end{thm}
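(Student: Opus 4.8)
The plan is to build, for the input pair $\langle i,j\rangle$ (which under $A_Z$ is presented as the constant sequence $\langle\langle i,j\rangle,\langle i,j\rangle,\dots\rangle$), a partial recursive $\phi_{eq}(\langle i,j\rangle,t)$ whose stage-$t$ value is a running verdict in $\{0,1\}$, together with a small amount of bookkeeping that forces the verdict to oscillate whenever the $g$/$h$-easiness promise fails. At stage $t$, for every $x\le t$ I would run $TM_i(x)$ for $g(x)$ steps and $TM_j(x)$ for $h(x)$ steps. Call $x$ \emph{$i$-regular} if $TM_i(x)$ halts within $g(x)$ steps, otherwise \emph{$i$-exceptional}, and similarly for $j$; if the promise holds there are only finitely many exceptional points, and the whole argument hinges on this. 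For every $x\le t$ regular for both machines I compare the two outputs directly, recording a disagreement if they differ. For every $x\le t$ that is exceptional for $i$ or for $j$ I additionally run $TM_i(x)$ and $TM_j(x)$ for $t$ steps and apply the rule: record ``agree'' if neither has halted or if both have halted with equal values, and ``disagree'' otherwise — i.e. I provisionally believe that an exceptional computation that has not yet halted will never halt. The stage-$t$ output is $1$ if no disagreement is on record and $0$ otherwise, except that at a sparse recursively chosen set of stages (for instance those powers of $2$ at which the number of exceptional points discovered so far has strictly increased since stage $\lfloor t/2\rfloor$) I output the complementary value; once no new exceptional point ever appears this flip condition is permanently false, so the flip is invisible on the promise.

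For correctness under the promise I would argue three things. First, there are only finitely many exceptional points, and the regularity status and the recorded output values at regular points stabilise, so these contribute only finitely many mind changes. Second, at each of the finitely many exceptional points the decision ``agree''/``disagree'' can change only finitely often, since halting within $t$ steps is monotone in $t$ and a computation's output, once seen, is fixed; hence the verdict $l$-converges. Third, the limit value is correct: if $\phi_i=\phi_j$ no disagreement is ever genuinely confirmed (regular points agree, and an exceptional point either has both computations halt to the common value or has both diverge), so the limit is $1$; if $\phi_i\ne\phi_j$, the least point $x_0$ of disagreement is eventually detected with stable detection — whether $x_0$ is regular for both (a direct value mismatch) or exceptional for at least one (one computation halts while the other fails to halt in every bounded simulation, or both halt to different values) — and any spurious disagreements at points $<x_0$, where $\phi_i$ and $\phi_j$ in fact agree, are only transient, so the limit is $0$. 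It is precisely here that the ``believe-it-diverges'' rule is essential: a disagreement that lives entirely at a slow-or-divergent argument, in particular a domain mismatch, cannot be detected any other way. Finally, when the promise fails, infinitely many exceptional points are discovered, so the flip device fires at infinitely many stages while cofinitely many stages remain unflipped; the output therefore takes both values infinitely often and has no final value, i.e. $l$-diverges.

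The main obstacle is the treatment of the exceptional points: the ``believe-it-diverges'' rule is simultaneously what makes the construction correct (the only way to catch a disagreement supported entirely on a slow or divergent argument) and what threatens it (a naive implementation could change its mind forever). The heart of the write-up is thus the monotonicity lemma — halting time is monotone in the simulation bound, so each of the finitely many exceptional points is responsible for only finitely many reversals — together with the somewhat fiddly design of the flip schedule, which must guarantee genuine $l$-divergence off the promise, including the degenerate case in which a new exceptional point is discovered at essentially every stage (so that the flip set cannot simply be ``new-exception stages''), without perturbing $l$-convergence when the promise holds.
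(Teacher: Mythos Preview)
Your proposal is broadly correct but takes a more elaborate route than the paper. The paper's procedure is short: at stage $t$ run $\phi_i(x)$ for $g(x)$ steps and $\phi_j(x)$ for $h(x)$ steps for all $x\le t$; let $y$ be the largest exceptional argument seen so far; if $y+1$ has not yet been output, output $y+1$; otherwise output $0$ or $1$ according to whether a disagreement has been found among the arguments where both computations halted within their bounds. Off the promise the value $y$ is unbounded, so the outputs $y+1$ take infinitely many distinct values and the sequence trivially fails to stabilise. There is no flip device and no attempt to keep the output inside $\{0,1\}$.

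Two points of comparison. First, your believe-it-diverges rule at exceptional arguments is more careful than the paper: it correctly detects a disagreement that lives entirely at an exceptional point (one side converges slowly, the other diverges), a case the paper's comparison---which only inspects arguments where both machines halted within their bounds---does not explicitly address. Second, and on the other side, your divergence argument is shakier than it needs to be. The claim ``the output therefore takes both values infinitely often'' is not justified as stated: both the base verdict $v_t\in\{0,1\}$ and your flip schedule are deterministic functions of the same simulation data, so in principle they can synchronise (e.g.\ $v_t=0$ exactly at flip stages and $v_t=1$ elsewhere, giving constant output $1$). You flag the schedule as ``fiddly'' but do not actually rule this out. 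The paper's device---outputting the growing number $y+1$ rather than a flipped bit---avoids the issue completely; if you want to keep your construction, the cleanest repair is to do the same and output, say, $y+2$ whenever a new largest exception appears.
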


\begin{proof}
The function $\phi_{eq}(\langle i,j\rangle ,t)$ is computed as follows. 
For all $x\leq t$, run $\phi_i(x)$ for at most $g(x)$ steps and run $\phi_j(x)$ for at most $h(x)$ steps. If $y$ is the largest value where $\Phi_i(y)>g(y)$ or $\Phi_j(y)>h(y)$ or both, then output $y+1$ unless $y+1$ was output before at some time. Otherwise, output $0$ if there is a case $z$ where $\phi_i(z)\neq \phi_j(z)$. If none are found, then output $1$.

The output $y+1$ is a guess that $\phi_i$ is g-easy and $\phi_j$ is h-easy where all exceptions are less than or equal to $y$. From that point on, $\phi_{eq}$ guesses $1$ (the two are equal) until an exception is found.
\end{proof}

So the Boolean test of equality for h-easy classes of total recursive functions is Computable in the Limit.

Note, though, that neither the class of Polynomial functions nor the class of NP functions can be defined in this way, because the h-easy function will grow to include all polynomial exponents. This allows for functions whose computation time is not within any exponent. We can, though, enumerate indexes of the polynomial functions of the NP functions, and use them instead of $g$ and $h$.

\begin{thm}
For any two total recursively enumerable sets $A$ and $B$ of total recursive functions, $P_{eq}$ is Computable in the Limit, where if $\phi_i\in A$ and $\phi_j\in B$ then $P_{eq}(\langle i,j\rangle )=1$ if $\phi_i=\phi_j$ and $P_{eq}(\langle i,j\rangle )=0$ if $\phi_i\neq\phi_j$ infinitely often, and $P_{eq}(\langle i,j\rangle )$ diverges if either $\phi_i\notin A$ or $\phi_j\notin B$.
\end{thm}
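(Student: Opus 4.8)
The plan is to build the computing function $\phi_{eq}(\langle i,j\rangle,t)$ so that it refuses to compare $\phi_i$ and $\phi_j$ until it has positively confirmed, using the enumerations $\phi_a$ and $\phi_b$, that the indices $i$ and $j$ really name members of $A$ and $B$. Explicitly, at stage $t$ first compute the finite sets $R^a_t=\{\phi_a(0),\dots,\phi_a(t)\}$ and $R^b_t=\{\phi_b(0),\dots,\phi_b(t)\}$; these computations halt because $\phi_a$ and $\phi_b$ are total. If $i\notin R^a_t$ or $j\notin R^b_t$, send $\phi_{eq}$ into an infinite loop at this stage (equivalently, one may emit a fresh, ever-changing dummy value such as $t+2$). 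Otherwise $\phi_i$ and $\phi_j$ are known to be total, so compute $\phi_i(x)$ and $\phi_j(x)$ for every $x\le t$ --- each such computation halts --- and output $0$ if $\phi_i(x)\ne\phi_j(x)$ for some $x\le t$ and output $1$ otherwise. This $\phi_{eq}$ is evidently partial recursive, and under the standing convention it determines a property $P_{eq}$ over $A_Z$ whose computing function simply reads $\langle i,j\rangle$ off the constant sequence and applies the above.

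Next I would check the three cases against the definition of $l$-convergence. Let $t_0$ be the least stage at which both $i\in R^a_{t_0}$ and $j\in R^b_{t_0}$, if such a stage exists. If $i\notin A$ or $j\notin B$, no such $t_0$ exists, $\phi_{eq}(\langle i,j\rangle,t)$ diverges for every $t$, so $P_{eq}(\langle i,j\rangle)$ produces no value and hence $l$-diverges. If both memberships hold, then $\phi_i$ and $\phi_j$ are total; when $\phi_i=\phi_j$ the comparison never finds a disagreement, so the output is $1$ for every $t\ge t_0$ and $P_{eq}(\langle i,j\rangle)$ $l$-converges to $1$; when $\phi_i\ne\phi_j$, let $x_0$ be the least argument with $\phi_i(x_0)\ne\phi_j(x_0)$, and then the output is $0$ for every $t\ge\max(t_0,x_0)$ while earlier stages are either undefined or give $1$, so $P_{eq}(\langle i,j\rangle)$ $l$-converges to $0$. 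This is in fact a hair stronger than the statement: we get the value $0$ as soon as $\phi_i$ and $\phi_j$ disagree at a single point, which in particular covers the stated hypothesis that they disagree infinitely often.

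The one step that is not pure bookkeeping --- and which I expect to be the only real content --- is seeing why the membership test through $\phi_a$ and $\phi_b$ cannot be dropped. If instead we merely ran $\phi_i(x)$ and $\phi_j(x)$ for $x\le t$ on some step budget, with no check against the enumerations, then an arbitrary index $i$ naming a partial function that agrees with $\phi_j$ on its (possibly proper) domain would yield the output $1$ at every stage, so $P_{eq}$ would $l$-converge to $1$ rather than $l$-diverge, violating the clause that $P_{eq}(\langle i,j\rangle)$ must diverge when $\phi_i\notin A$ or $\phi_j\notin B$. Gating the comparison behind the enumerations is exactly the ``confirm the membership fact first, then proceed'' device already used for the $l$-total examples in the previous section, here specialised to a Boolean output; it is also the only place the hypothesis that $A$ and $B$ are recursively enumerated families of \emph{total} functions gets used, and it is what lets this theorem subsume the Polynomial--NP example by feeding it enumerations of the polynomial and of the NP indices.
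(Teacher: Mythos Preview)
Your proposal is correct and follows the same approach as the paper: the paper's proof is a single sentence stating that $\phi_{eq}(\langle i,j\rangle,t)$ first checks that $\phi_i\in A$ and $\phi_j\in B$ and then runs the equality test, and you have fleshed out exactly that construction and verified the three cases. Your additional paragraph explaining why the membership gate cannot be dropped is a useful elaboration not present in the paper but entirely consistent with its argument.
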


\begin{proof}
In this case, $\phi_{eq}(\langle i,j\rangle ,t)$ checks that $\phi_i\in A$ and $\phi_j\in B$ before running the test for equality.
\end{proof}

Note that these two theorems do not work for sets of partial recursive functions, because we can get stuck on a case where $\phi_i(x)$ converges and $\phi_j(x)$ diverges. The procedure does not identify this case, so it will return equality even though one function diverged.

It is also possible to Compute in the Limit the ratio of equal to unequal values in the two classes, if the ratio is a rational number bounded in the limit.

\begin{definition} Given two total functions $F$ and $G$, the {\bf Error of G on F} (or $F$ on $G$) is the value
$Err(x,F,G)=\left|\left\{{y\le x | F(y)\ne G(y)}\right\}\right| /x$. This is a rational number in the range $0$ to $1$.
\end{definition}

\begin{definition}
{\bf The Error of G on F Converges to v} if there exists an error bound $\varepsilon(x)=a/b$ such that for each $x$, the value $a/b$ is a rational number in the range $0$ to $1$, where $\varepsilon(x)$ is subject to the following two conditions:

\begin{itemize}
\item For all $u$ and $v$, if $u<v$ then $\varepsilon(u)\geq\varepsilon(v)$.
\item For all rational numbers $c/d$ in the range $0$ to $1$ where $c/d\neq0$ there exists a $t$ where $c/d>\varepsilon(t)$.
\end{itemize}

Then there exists an $m$ such that for all $n\geq m$ if $Err(n,F,G)=w$ then $|v - w|\leq\varepsilon(n)$.
\end{definition}

\begin{thm}\label{thmRational}
For any two recursively enumerable classes of total recursive functions, $A$ and $B$, enumerable by $\phi_a$ and $\phi_b$, and an error bound $\varepsilon(x)$, $$P_{err}(\langle i,j\rangle )=\langle i,j,x,y\rangle $$ is Computable in the Limit, where for some $n$ and $m$, $\phi_a(n)=i$ and $\phi_b(m)=j$ and the error of $\phi_i$ on $\phi_j$ converges to a rational number $x/y$.
\end{thm}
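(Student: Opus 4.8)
The plan is to build the limiting-computation function $\phi_{err}(\langle i,j\rangle, t)$ by dovetailing three tasks: verifying membership of $\phi_i$ in $A$ and $\phi_j$ in $B$, accumulating the running error count, and tracking which error-bound guess is currently consistent. First I would handle membership exactly as in the proof of the previous theorem: at stage $t$, check whether $i$ and $j$ have appeared in the ranges of $\phi_a$ and $\phi_b$ among the first $t$ values; if not, let $\phi_{err}$ diverge (simply loop), which gives the required divergence behavior when $\phi_i \notin A$ or $\phi_j \notin B$. Once membership is witnessed, both functions are total, so for each $x \le t$ the values $\phi_i(x)$ and $\phi_j(x)$ converge and the disagreement count $d(t) = |\{y \le t : \phi_i(y) \ne \phi_j(y)\}|$ and hence the running error $\mathrm{Err}(t,\phi_i,\phi_j) = d(t)/t$ is computable from $i,j,t$.

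Next I would use the error bound $\varepsilon(x)$ to pin down the candidate limit value $x/y$. The key structural fact is that $\varepsilon$ is nonincreasing and tends to $0$ in the strong sense given in the definition: every nonzero rational eventually exceeds $\varepsilon(t)$. So at stage $t$ I would search among the finitely many rationals $x/y$ with $y \le t$ (in $[0,1]$, in some fixed effective order) for one satisfying $|x/y - \mathrm{Err}(t,\phi_i,\phi_j)| \le \varepsilon(t)$, and output $\langle i,j,x,y\rangle$ for the first such pair, using a ``do not repeat a value already output'' shift exactly as in the Kolmogorov-incompressible example — that is, if the chosen pair was emitted at an earlier stage, keep the earlier output rather than re-emitting. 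The point of the definition of ``Error converges to $v$'' is precisely that there is an $m$ such that for all $n \ge m$, $\mathrm{Err}(n,\phi_i,\phi_j)$ lies within $\varepsilon(n)$ of the true limit $v = x/y$; combined with $\varepsilon(n) \to 0$, this forces the window $[\mathrm{Err}(n,\phi_i,\phi_j) - \varepsilon(n),\ \mathrm{Err}(n,\phi_i,\phi_j) + \varepsilon(n)]$ to eventually exclude every rational other than $v$ (here one uses that two distinct rationals with bounded denominator are separated by a fixed positive gap, so once $\varepsilon(n)$ drops below half that gap and $n$ is past the relevant index bounds, only $v$ survives), so the output $l$-converges to $\langle i,j,x,y\rangle$ where $x/y = v$.

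The main obstacle I anticipate is the bookkeeping that makes the output stabilize rather than oscillate forever among near-ties: before the true limit is isolated there may be several rationals passing the $\varepsilon(t)$-test at a given stage, and as $t$ grows the ``first'' such rational in my enumeration could flip back and forth. The remedy is to bound denominators by $t$ (or by any slowly growing computable bound) so that at each stage only finitely many candidates are in play, and to argue that past a single finite threshold — the maximum of $m$, the stage where $i,j$ are witnessed, and the stage where $\varepsilon$ has shrunk below the minimal pairwise gap among rationals with denominator up to the relevant bound — the test is passed by exactly the one pair representing $v$. After that threshold the output never changes, so $P_{err}$ $l$-converges; before it, only finitely many distinct values are ever emitted, which is all the definition of Computable in the Limit requires. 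I would also remark that when the error does not converge in the sense of the definition, $\phi_{err}$ may change its mind infinitely often and $P_{err}$ $l$-diverges, consistent with the statement.
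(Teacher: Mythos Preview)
Your approach is close to the paper's, but there is a genuine gap in the stabilization argument. You claim that past a finite threshold ``the test is passed by exactly the one pair representing $v$'', and this is false. At every stage $t$ the running error $\mathrm{Err}(t,\phi_i,\phi_j)=d(t)/t$ is itself a rational in $[0,1]$ whose denominator (in lowest terms) is at most $t$, and it trivially lies within $\varepsilon(t)$ of itself, so it always passes your test. More generally, as $t$ grows you keep admitting new rationals with ever larger denominators into the search space, and many of these cluster near $\mathrm{Err}(t)$ and hence inside the window. Your ``minimal pairwise gap among rationals with denominator up to the relevant bound'' is not a fixed positive number when the bound is $t$: it shrinks like $1/t^{2}$, so there is no single threshold after which the window isolates $v$ alone. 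With an arbitrary ``fixed effective order'' the first rational passing the test can therefore change infinitely often among these incidental close approximants, and the ``do not repeat'' device does not help, since the offending outputs are new each time.

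The paper's proof repairs this with one specific selection rule: at stage $t$, output the rational $a/b$ of \emph{smallest denominator} satisfying $|a/b-\mathrm{Err}(t)|\le\varepsilon(t)$. This works because there are only finitely many rationals in $[0,1]$ with denominator strictly less than $y$ (where $v=x/y$ in lowest terms); each of these sits at a fixed positive distance from $v$, so once $t\ge m$ and $\varepsilon(t)$ has dropped below (half of) the minimum such distance, none of them can lie in the window, while $v$ always does by the convergence hypothesis. Hence the smallest-denominator survivor is exactly $x/y$ from that point on, and the output stabilizes. If you specialize your ``fixed effective order'' to be denominator-first, your argument becomes the paper's; as written it does not go through. (The membership dovetailing you add is harmless but not needed here: the theorem's hypotheses already place $i$ in the range of $\phi_a$ and $j$ in the range of $\phi_b$, and the paper simply runs $\phi_i$ and $\phi_j$ directly.)
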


\begin{proof}
The function $\phi_{err'}(\langle i,j\rangle ,t)$ is computed as follows. Run $\phi_i(x)=\phi_j(x)$ for each $x\leq t$. Compute
$Err(t,\phi_i,\phi_j)=w$. Find the rational number $a/b$ with the smallest denominator $b$ such that $|w - a/b| \leq\varepsilon(n)$.

The assumption is that the error of $\phi_i$ on $\phi_j$ converges to a rational number $x/y$. So there is an $m$ such that for all $n\geq m$ if $Err(n,F,G)=w$ then
$|w - x/y|\leq\varepsilon(n)$.

If $e/f$ is a rational number where $e/f\neq x/y$ then there exists a rational $c/d$ where $|e/f - x/y|=c/d\neq 0$. So there exists a $t$ where $c/d>\varepsilon(t)$. For all $s\geq t$,
$|w - e/f| >\varepsilon(n)$ so $e/f$ is not chosen after time $t$. This is true for all rational values $e/f$ where $f<y$ so all rationals in the range $0$ to $1$ are rejected in favor of $x/y$ at some time $r$. Therefore $\phi_{err'}(\langle i,j\rangle ,r)=x/y$ and for all $s>r$, $\phi_{err'}(\langle i,j\rangle ,s)=\phi_{err'}(\langle i,j\rangle ,r)$.
\end{proof}

Note that, although this theorem does not apply to the reals in general, any real number that can be computed as the output of a Turing machine starting with a blank tape (such as $e$ or $\pi$) can be added to this function by using the index of the associated Turing Machine as a possible output.

Given a Computing in the Limit Problem $P_p$ where the output is a class of total functions $\phi_i$, we can generate a Canonical listing, where every function appears only once.

\begin{definition}
A {\bf Canonical Enumeration} of total functions is a Computing in the Limit Problem $P_p$ where every function is unique. That is, for any $x$ and $y$, if $P_p(x)=i$ and $P_p(y)=j$ then there is a $z$ such that $\phi_i(z)\neq\phi_j(z)$.
\end{definition}

\begin{thm}
Assume a class of total functions is Computable In the Limit by $P$, in the sense that for each $x$, if $P(x)=y$, then $\phi_y$ is a total recursive function. Then there is a \emph{l-}total $P'$ where an index of each function is given exactly once.
\end{thm}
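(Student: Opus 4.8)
The plan is to have $\phi_{p'}(n,t)$ run the limiting computations $\phi_p(x,\cdot)$ for all $x\le t$ simultaneously for $t$ steps, distil from the current guesses a duplicate-free list of the functions being produced, and output the index occupying slot $n+1$ of that list. The only real difficulty is that $P$ may $l$-diverge on some inputs $x$, and no finite stage can distinguish a slowly $l$-diverging column $\phi_p(x,\cdot)$ from a convergent one; such a column can put a spurious function (possibly total, possibly not even in the class) into the list and later retract it, and if this were handled carelessly it would force some $\phi_{p'}(n,\cdot)$ to change its mind infinitely often. I would neutralize this by indexing each function proposed by column $x$ not by $x$ itself but by the $\tau$-code $\tau(x,\sigma_t(x))$, where $\sigma_t(x)$ is the least stage at which $\phi_p(x,\cdot)$ appears, at time $t$, to have stopped changing, and by ordering the list by these codes. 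For a convergent column this key settles at the last-change stage; for an $l$-divergent column $\sigma_t(x)\to\infty$, so its key escapes past every fixed position of the list.

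In detail: at stage $t$, for all $x,s\le t$ decide by $t$-step simulation whether $\phi_p(x,s)$ halts within $t$ steps; call $(x,s)$ \emph{good at $t$} if it does and every later $s'\le t$ with $\phi_p(x,s')$ halting within $t$ steps has $\phi_p(x,s')=\phi_p(x,s)$; set $\sigma_t(x)=\min\{s\le t:(x,s)\text{ good at }t\}$ when this exists, with value $\gamma_t(x)=\phi_p(x,\sigma_t(x))$. List the $x\le t$ with $\sigma_t(x)$ defined in increasing order of $\tau(x,\sigma_t(x))$ (these codes are pairwise distinct, since $\tau$ is injective). Scan this list keeping a sequence of representatives; for the current $x$, use a $t$-step search for a disagreement point to test whether $\phi_{\gamma_t(x)}$ presently differs from each representative already chosen, and if so append $x$ as a new representative. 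Finally let $\phi_{p'}(n,t)=\gamma_t(x)$ for the $x$ producing the $(n+1)$-st representative, and let $\phi_{p'}(n,t)$ diverge if fewer than $n+1$ representatives are produced.

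The verification has three parts. First, for a convergent column $\sigma_t(x)$ stabilizes to the last-change stage and $\gamma_t(x)$ to the index $P(x)$, while for an $l$-divergent column $\sigma_t(x)\to\infty$ --- here one uses that such a column halts infinitely often and every halting value is contradicted at some later halting stage --- hence $\tau(x,\sigma_t(x))\to\infty$ there. Second, and this is the crux, for each bound $N$ only finitely many $x$ ever receive sort-key below $N$ (since $\tau$ dominates both of its arguments), so among those the $l$-divergent ones all eventually leave the zone $\{\text{key}<N\}$; choosing $N$ above the keys of the finitely many convergent witnesses that should fill slots $1,\dots,n+1$, and waiting further until those witnesses' keys and values have settled and the finitely many relevant inequalities among genuine total functions have all been witnessed (genuinely equal functions never being separated), the first $n+1$ representatives are thereafter exactly those witnesses in a fixed order with stable values, so $\phi_{p'}(n,\cdot)$ $l$-converges; thus $P'$ is $l$-total (assuming, as throughout the paper, that the class is infinite --- otherwise $\phi_{p'}$ merely fails to halt on the finitely many inputs past the size of the class). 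Third, slot $n+1$ always holds the $(n+1)$-st distinct function in key-order, so distinct slots yield distinct functions and every function of the class, being proposed by some convergent column, occupies exactly one slot, namely the one fixed by its least-key witness. The main obstacle is the bookkeeping in the second part --- pinning down that transiently-total functions from $l$-divergent columns cannot perpetually disturb any fixed prefix of the list --- and it rests entirely on the monotone escape $\sigma_t(x)\to\infty$ engineered in the construction together with the finiteness of every bounded sort-key zone.
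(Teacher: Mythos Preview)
Your proof is correct and follows the same high-level strategy as the paper --- build $P'$ by scanning the outputs of $P$ and retaining only those indices whose functions are new --- but you carry out the details far more carefully than the paper does. The paper's entire proof is the three-sentence sketch ``only add in a function if it differs with each of the previous functions by at least one input; if $P$ changes its mind, recompute the differences; eventually each output for $P'$ \emph{l}-converges,'' which tacitly treats $P$ as \emph{l}-total on every $x$. Your construction genuinely adds something: by indexing column $x$ with the key $\tau(x,\sigma_t(x))$ and sorting on that key, you force any \emph{l}-divergent column to drift past every fixed prefix of the list, so that an input on which $P$ oscillates forever cannot perpetually disturb slot $n$. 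This is exactly the obstacle the paper's sketch does not address, and your verification (finiteness of each bounded key-zone, eventual stabilization of $\sigma_t(x)$ on convergent columns, eventual discovery of all finitely many relevant disagreements) is the right way to close it. The trade-off is that the paper's version is a one-line idea suitable when $P$ is already \emph{l}-total, while yours is a complete argument that also covers \emph{l}-partial $P$ at the cost of the extra bookkeeping; your caveat about the finite-class edge case is also appropriate and not mentioned in the paper.
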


\begin{proof}
$P'$ is derived from $P$ where we only add in a function if it differs with each of the previous functions by at least one input. If $P$ changes its mind, we have to recompute the differences. Eventually, each output for $P'$ \emph{l-}converges.
\end{proof}

\begin{definition}
A {\bf Complexity Bound Enumeration} of a canonical enumeration of total functions is an enumeration of all values $i$ where there is a $\phi_j$ in the canonical enumeration such that
\begin{itemize}
\item For all $x$, $\phi_i(x)=\phi_j(x)$
\item For all $x$, $\Phi_i(x)\leq\Phi_j(x)$
\end{itemize}
\end{definition}

A Complexity Bound Enumeration finds all of the algorithms that are faster than the one in the Canonical Enumeration.

\begin{thm}
If a class of total functions is Canonically Enumerable, then the complexity bound enumeration is Computable in the Limit.
\end{thm}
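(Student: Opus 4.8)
The plan is to layer the construction on top of the canonical enumeration produced by the previous theorem and to rephrase the target as enumerating a set of \emph{valid pairs}. First I would fix an \emph{l-}total canonical enumeration $P$ of the class, computed by some $\phi_p(x,t)$, so that every $P(x)$ is defined, every $\phi_{P(x)}$ is total, and $\phi_{P(x)}$ and $\phi_{P(y)}$ differ at some input whenever $x\neq y$. I would then call a pair $\langle i,x\rangle$ \emph{valid} if $\phi_i(z)=\phi_{P(x)}(z)$ for all $z$ and $\Phi_i(z)\leq\Phi_{P(x)}(z)$ for all $z$. An index $i$ lies in the complexity bound enumeration exactly when it occurs as the first coordinate of some valid pair, and --- because the $\phi_{P(x)}$ are pairwise distinct --- distinct valid pairs have distinct first coordinates; so it is enough to enumerate the valid pairs in the limit and project onto first coordinates.

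Next I would describe the stage-$t$ approximation. Fix a primitive recursive listing $q_0,q_1,\dots$ of all pairs. At stage $t$, for each $q_m=\langle i,x\rangle$ with $m,i,x\leq t$, I compute $j=\phi_p(x,t)$ (discarding $q_m$ at this stage if it does not halt within $t$ steps), simulate $\phi_i(z)$ and $\phi_j(z)$ for $t$ steps each on every $z\leq t$, and declare $q_m$ \emph{surviving at stage $t$} if no violation has yet appeared, i.e.\ $\phi_i(z)=\phi_j(z)$ and $\Phi_i(z)\leq\Phi_j(z)$ for every $z\leq t$ on which both simulations have halted. The Computing in the Limit procedure $\phi_{p'}(n,t)$ then outputs the first coordinate of the $n$-th surviving pair at stage $t$ (in the ordering by index), and diverges at stage $t$ if fewer than $n+1$ pairs survive. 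This is manifestly partial recursive in $\langle n,t\rangle$, as a Computing in the Limit procedure over $A_Z$ must be.

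The convergence proof would rest on two observations. If $\langle i,x\rangle$ is valid, then since $\phi_p(x,\cdot)$ \emph{l-}converges to $P(x)$ and the two genuine conditions $\phi_i=\phi_{P(x)}$, $\Phi_i\leq\Phi_{P(x)}$ never fail, the pair survives at every sufficiently large stage. If $\langle i,x\rangle$ is invalid, then --- using that $P$ is \emph{l-}total, so $\phi_p(x,\cdot)$ \emph{l-}converges to $P(x)$ --- there is a least input $z_0$ witnessing the failure, and past the stage at which the canonical guess has settled and both $\phi_i(z_0)$ and $\phi_j(z_0)$ have been computed, the pair is permanently discarded (it is also discarded at any stage where $\phi_p(x,t)$ fails to halt in time). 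Listing the valid pairs in index order as $q_{j_0},q_{j_1},\dots$, fixing $n$, and noting that only finitely many pairs have index below $j_n$, one obtains a common stage beyond which all invalid pairs among $q_0,\dots,q_{j_n}$ are gone while $q_{j_0},\dots,q_{j_n}$ all survive; hence the $n$-th surviving pair has stabilized to $q_{j_n}$, so $\phi_{p'}(n,\cdot)$ \emph{l-}converges to its first coordinate, and the set of \emph{l-}converged values is exactly the complexity bound enumeration, each index appearing once. If only finitely many pairs are valid --- in particular if the class is finite --- then $\phi_{p'}(n,\cdot)$ merely \emph{l-}diverges for large $n$, which is still Computable in the Limit as an \emph{l-}partial property; conversely, padding a canonical index with unreachable states leaves $\phi_j$ and $\Phi_j$ unchanged and yields infinitely many valid pairs, so when the class is nonempty $P'$ is \emph{l-}total.

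I expect the main obstacle to be the ``phantom survivor'' issue: at any finite stage the surviving set can contain not-yet-refuted invalid pairs of arbitrarily large index, so convergence cannot be argued by simply waiting for ``all'' pairs to settle. The device that makes the position-by-position argument work is the second observation above together with the fact that each bounded initial segment of the pair listing contains only finitely many pairs and therefore does stabilize; the rest --- the simulations, the bookkeeping of which inputs have halted, and the projection --- is routine.
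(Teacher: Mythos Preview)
Your construction is exactly what the paper sketches in its one-line proof (``diagonalize each canonical index over the enumeration of all Turing machines $TM_j$, discarding any that differ or take too much time''); you have simply supplied the bookkeeping the paper omits.

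There is, however, one genuine gap in the survival test as you state it. You declare $q_m=\langle i,x\rangle$ surviving at stage $t$ provided $\phi_i(z)=\phi_j(z)$ and $\Phi_i(z)\le\Phi_j(z)$ for every $z\le t$ \emph{on which both simulations have halted}. But nothing forces $\phi_i$ to be total: if $\phi_i(z_0)$ diverges while $\phi_{P(x)}(z_0)$ converges, then $z_0$ is never in the set ``on which both simulations have halted,'' so no violation is ever recorded for that pair. Your convergence argument for the invalid case explicitly invokes a stage past which ``both $\phi_i(z_0)$ and $\phi_j(z_0)$ have been computed,'' and that stage does not exist. Such a phantom survivor can sit below $j_n$ forever and block the $n$-th output from stabilizing.

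The repair is immediate and in fact simpler than what you wrote: since $\phi_{P(x)}$ is total, test only those $z\le t$ on which $\phi_j(z)$ has halted within $t$ steps, and declare a violation if $\phi_i(z)$ has not halted within $\Phi_j(z)$ steps (this simultaneously catches $\Phi_i(z)>\Phi_j(z)$ and divergence of $\phi_i(z)$), or if both halt but with different values. With that change your argument goes through verbatim.
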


\begin{proof}
Given the canonical listing, diagonalize each canonical index $i$ over the enumeration of all Turing machines $TM_j$, discarding any that differ ($\phi_i(x)\neq \phi_j(x)$) or take too much time ($\Phi_i(x)> \Phi_j(x)$).
\end{proof}

\section{Some Properties That Are Not Computable in the Limit}

We end with a couple of examples of properties that are not Computable in the Limit.

\begin{definition}
A {\bf Complete Enumeration} of a class of functions $A$ is an \emph{l-}total property $P_p$, Computable in the Limit, such that, for all $TM_i\in A$, every function $TM_j=TM_i$ has a value $x$ where $P_p(x)=j$.
\end{definition}

\begin{thm}
The Complete Enumeration of all Total Recursive Functions is not Computable in the Limit by any \emph{l-}Total $P_p$.
\end{thm}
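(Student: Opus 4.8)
The plan is to derive a contradiction from the assumption that some \emph{l}-total property $P_p$, Computable in the Limit over $A_Z$, enumerates \emph{every} index $j$ with $TM_j = TM_i$ for each total recursive $TM_i$. First I would unpack what "Complete Enumeration" forces: for each total function $f$, the set of indices $\{j : \phi_j = f\}$ is infinite (pad any index with dead states), so $P_p$ must, as $x$ ranges over inputs, hit every one of these infinitely many indices. In particular the range of the induced map $x \mapsto P_p(x)$ (wherever it \emph{l}-converges, which by \emph{l}-totality is everywhere) must contain $\{j : \phi_j = f\}$ for all total $f$ simultaneously.

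The heart of the argument is a diagonalization against the Computable-in-the-Limit machinery, using the $\lambda$-normal form from Theorem~3.9. Since $P_p$ is \emph{l}-total and Computable in the Limit, by the Normal Form Theorem there is a primitive recursive $U$ and predicate $T'$ with $P_p(x) = U(\lambda y\, T'(p,x,y))$, and for every $x$ the predicate $T'(p,x,y)$ is true at a nonempty finite set of $y$'s; moreover the successive values $U(y)$ for those $y$ in increasing order record the finite sequence of "mind changes," the last of which is $P_p(x)$. The key observation is that an \emph{l}-total Complete Enumeration gives us a genuine (non-limiting) effective handle on membership: to decide, given any index $e$ and any $x$, whether $P_p(x) = e$, we need only wait — but waiting is exactly what we cannot do effectively. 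So instead I would build a single total recursive function $f$ by diagonalization and show the indices of $f$ cannot all be enumerated. Concretely, I would define $f$ so that, using the recursion theorem, $f$ has an index $e_0$ with $\phi_{e_0} = f$ total, and arrange $f$'s definition to "watch" the \emph{l}-approximation $\phi_p(x,t)$ and diagonalize against ever outputting certain target indices equivalent to $e_0$; because $P_p$ is forced to be \emph{l}-total it cannot escape into \emph{l}-divergence, and because it is forced to be Complete it must eventually output those indices, contradiction.

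The main obstacle — and the step I expect to require the most care — is making the diagonalization interact correctly with the \emph{limit} semantics rather than ordinary computation. The difficulty is that $P_p(x)$ is only its \emph{final} guess, so a naive "if $P_p(x)$ halts and equals $e$ then do the opposite" is not effective: we only ever see finite approximations $\phi_p(x,t)$ that may still change. The fix I would pursue is to exploit \emph{l}-totality to get a bound: since $P_p$ \emph{l}-converges on every input, the "settling time" function $x \mapsto (\text{least } u \text{ after which } \phi_p(x,t) \text{ is constant})$ is total, though not recursive — so I would argue at the level of the infinitely many indices. Specifically, fix a total $f$ via the recursion theorem whose behavior depends on $P_p$; $f$ has infinitely many indices $j_0 < j_1 < \dots$ with $\phi_{j_k} = f$; completeness demands each $j_k$ appears as some $P_p(x_k)$; then use a counting/pigeonhole argument together with the fact that $T'$ is true only finitely often per input to show the enumeration is forced to "commit" to $j_k$ at a stage that lets $f$ (defined to diagonalize) contradict $\phi_{j_k} = f$. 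Closing this loop cleanly — choosing the right quantity for $f$ to diagonalize against and verifying the recursion-theorem fixed point is genuinely total so that \emph{l}-totality of $P_p$ is actually violated or completeness fails — is where the real work lies; the surrounding bookkeeping with $U$, $T'$, and $A_Z$ is routine.
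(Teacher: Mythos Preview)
Your proposal correctly isolates the central difficulty --- a naive diagonalization cannot ``see'' when $\phi_p(x,t)$ has settled --- but it does not overcome it. The recursion-theorem construction you sketch must specify, \emph{effectively inside the definition of $f$}, which approximant $\phi_p(x,t)$ to react to; you defer this to an unspecified ``counting/pigeonhole'' step and explicitly concede that closing the loop ``is where the real work lies.'' That is the gap: without a concrete rule telling $f$ when to trust a guess of $\phi_p$, you cannot argue that the fixed point is total, so neither \emph{l}-totality nor completeness of $P_p$ is actually contradicted. The detour through the normal form $U(\lambda y\,T'(p,x,y))$ does not help here, since knowing that $T'(p,x,\cdot)$ holds at only finitely many $y$ gives no effective handle on the last such $y$.

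The paper avoids both the recursion theorem and any pigeonhole argument. It defines, uniformly in a delay parameter $i$, a function $\phi_{f(i)}(\langle x,u\rangle)$ that runs $\phi_p(x,t)$ for all $t\le u+i$ for at most $u+i$ steps, takes the latest guess $y$ so obtained, and outputs $\phi_y(\langle x,u\rangle)+1$. Because $P_p$ is \emph{l}-total and every limit value indexes a total function, for a sufficiently large parameter the resulting $\phi_{f(i)}$ is itself total; completeness then forces $P_p(y)=j$ for some $y$ with $j=f(i)$, and evaluating at a late enough input yields $\phi_j(\langle y,s\rangle)=\phi_j(\langle y,s\rangle)+1$. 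The device you are missing is exactly this lookahead budget $u+i$ tied to the input: it converts ``$\phi_p(x,\cdot)$ eventually settles'' into ``settles within the resources the diagonalizing computation already has,'' making the construction effective without ever needing to know the settling time.
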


\begin{proof}
Assume the opposite: the complete enumeration of all total recursive functions is Computable in the Limit by a \emph{l-}Total $P_p$.

Since $\phi_p$ is \emph{l-}total then for each $x$ there exists a $u$ such that for all $t\geq u$ if $\phi_p(x,t)$ converges, then $\phi_p(x,t)=\phi_p(x,u)=y$ and $\phi_y$ is a total recursive function. Also, every index $j$ of a total function $TM_j$ is an output of $P_p(v)$ for some $v$.

Define a family of functions $f(i)$ where for each $i$, $\phi_{f(i)}(\langle x,u\rangle )$ is constructed from $\phi_p(x,t)$ as follows: begin by running $\phi_p(x,t)$ for all $t\leq (u+i)$ for $(u+i)$ steps at most. If no such $\phi_p(x,t)$ converges in time $(u+i)$ or less, then dovetail the computations of $\phi_p(x,t)$ for all t, and find the first $v$ where $\phi_p(x,v)$ converges. If $w$ is the largest value where $\phi_p(x,w)$ converges in this computation and $\phi_p(x,w)=y$, then run $\phi_y(\langle x,u\rangle )$. If $\phi_y(\langle x,u\rangle )$ converges, where $\phi_y(\langle x,u\rangle )=z$ then output $z+1$.

By the assumption, for all $x$, $P_p(x)$ \emph{l-}converges to the index of a total recursive function $y$, although the initial guesses may be of partial recursive functions. If $\phi_{f(i)}(\langle x,u\rangle )$ uses one of these guesses, it will never halt, at least on a finite number of initial values of $u$. But there will come a time $s$ where for all $t\geq s$ the function $\phi_{f(t)}(\langle x,u\rangle )$ converges to a value for every $u$ and is therefore total.

Choose $j$, $f(t)=j$ where $\phi_j$ is total. Since $P_p$ is a complete enumeration, there will be an index $y$ such that $P_p(y)=j$ in the limit, so after a finite value $u$, for all $t>u$ $\phi_p(\langle y,t\rangle )=j$ if it converges. Let $\phi_j(\langle y,t\rangle )=z$. This value $z$ exists, since $\phi_j$ is total. But by the construction of $\phi_{f(t)}=\phi_j$ given above, $\phi_{f(t)}(\langle y,s\rangle )=\phi_j(\langle y,s\rangle )=\phi_j(\langle y,s\rangle )+1$ a contradiction. So the Complete Enumeration of all Total Recursive Functions is not Computable in the Limit by a \emph{l-}total $P_p$.
\end{proof}

We can strengthen this result by showing that no Complete Enumeration of even a single total recursive function is Computable in the Limit by any $P_p$, even if it is \emph{l-}partial.

\begin{thm}
Given any total recursive function $TM_i$, the Complete Enumeration of all Total Functions $TM_j$ equal to $TM_i$ is not Computable in the Limit by any $P_p$.
\end{thm}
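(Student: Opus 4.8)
The plan is to drop the direct diagonalization used for the preceding theorem — it genuinely needed \emph{l-}totality, since the dovetailing step in the construction of $\phi_{f(i)}$ had to find a convergent value of $\phi_p(x,\cdot)$ for every $x$ — and argue instead by a complexity mismatch. Suppose for contradiction that $P_p$, computed by $\phi_p(x,t)$, is a complete enumeration of the one-element class $\{TM_i\}$, now allowing $P_p$ to be \emph{l-}partial. Unwinding what this means — as the proof of the previous theorem tacitly uses, an ``enumeration of a class'' may only \emph{l-}converge to indices of functions in that class — gives: (a) whenever $P_p(x)$ \emph{l-}converges to $y$ then $\phi_y=\phi_i$, and (b) for every $j$ with $\phi_j=\phi_i$ there is an $x$ with $P_p(x)=j$. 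Writing $E_i=\{j:\phi_j=\phi_i\}$ and $R_p=\{y:\exists x\,(P_p(x)=y)\}$, conditions (a) and (b) say exactly that $R_p=E_i$. (Condition (a) is what gives the theorem content: without it the map with $\phi_q(\langle x,x,\dots\rangle)=x$ is an \emph{l-}total $P_p$ with $P_p(x)=x$ and would ``enumerate'' $E_i$ trivially.)

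First I would pin down the complexity of $R_p$. By the definition of $P_p(x)=y$, this predicate is
$$\exists u\,[\,\phi_p(x,u)\!\downarrow = y\ \wedge\ \forall t>u\,(\phi_p(x,t)\!\downarrow\,\Rightarrow\,\phi_p(x,t)=y)\,].$$
The matrix in brackets is a conjunction of a $\Sigma_1$ statement and a $\Pi_1$ statement — ``if $\phi_p(x,t)$ ever halts it halts with value $y$'' is $\Pi_1$, since its negation is r.e. — hence $\Sigma_2$; prefixing the two outermost existential quantifiers $\exists u$ and (for $R_p$) $\exists x$ keeps it $\Sigma_2$. So $R_p$ is a $\Sigma_2$ set, i.e.\ r.e.\ relative to the halting problem.

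Second I would show $E_i$ is $\Pi_2$-complete for every total $\phi_i$. It is $\Pi_2$ because $j\in E_i$ iff $\forall x\,(\phi_j(x)\!\downarrow\wedge\phi_j(x)=\phi_i(x))$ and $\phi_i$ is total, so the matrix is r.e.; it is $\Pi_2$-hard because from an index $k$ one builds effectively $g(k)$ with $\phi_{g(k)}(x)=\phi_i(x)$ if $\phi_k(y)\!\downarrow$ for all $y\le x$ and $\phi_{g(k)}(x)\!\uparrow$ otherwise, so that $\phi_k$ total $\Rightarrow\phi_{g(k)}=\phi_i\Rightarrow g(k)\in E_i$, while $\phi_k$ non-total $\Rightarrow\phi_{g(k)}$ non-total $\Rightarrow\phi_{g(k)}\ne\phi_i\Rightarrow g(k)\notin E_i$; thus the $\Pi_2$-complete set $\mathrm{TOT}=\{k:\phi_k\ \text{total}\}$ many-one reduces to $E_i$. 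Then $R_p=E_i$ would make the $\Pi_2$-complete set $E_i$ also $\Sigma_2$, hence $\Delta_2$, hence $\mathrm{TOT}$ would be $\Delta_2$ — a contradiction, which proves the theorem.

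The step I expect to be the real obstacle is making steps two and three self-contained, since the paper has not introduced the arithmetical hierarchy; an elementary version has to produce directly, from the hypothetical $P_p$, an index $e$ with $\phi_e=\phi_i$ yet $e\notin R_p$ — i.e.\ show $E_i$ is ``productive against $\Sigma_2$ sets''. That needs a recursion-theorem construction in which $\phi_e$ monitors, \emph{using a halting-problem oracle}, whether $P_p$ ever \emph{l-}stabilizes to $e$ on some input, and if so makes $\phi_e$ deviate from $\phi_i$ (thereby also leaving $E_i$, so completeness no longer demands $e\in R_p$). The delicate point — the reason the argument must climb one level, or equivalently use a $\emptyset'$-relativized recursion theorem — is that an ordinary unrelativized machine cannot detect ``$P_p$ \emph{l-}converges to $e$'', because that event is only co-r.e.\ in the tail of $\phi_p(x,\cdot)$. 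I would therefore present the hierarchy version as the main line and remark that the impossibility is driven precisely by this jump: the index set of a single total function is $\Pi_2$-complete, whereas the value-set of any limit computation sits at the $\Sigma_2$ level.
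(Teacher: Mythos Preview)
Your argument is correct and genuinely different from the paper's. You reduce the question to a level mismatch in the arithmetical hierarchy: the value-set $R_p$ of any limit computation is $\Sigma_2$, while the index set $E_i$ of a total function is $\Pi_2$-complete (via the reduction from $\mathrm{TOT}$ you give), so $R_p=E_i$ collapses $\Sigma_2=\Pi_2$. This is clean, explains \emph{why} the obstruction occurs, and immediately generalizes to any $\Pi_2$-complete target. The paper instead runs a direct recursion-theorem construction: it builds a self-referential $TM_n$ that copies $TM_i$ until it sees $\phi_p(z,s)=n$ for some $z,s$, and then on selected inputs waits (by ordinary dovetailing, no oracle) for $\phi_p(z,\cdot)$ to later output something other than $n$; if that never happens, $TM_n$ simply diverges there. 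So either $P_p$ never \emph{l-}converges to $n$ and then $TM_n=TM_i$ is missed, or $P_p(z)=n$ for some $z$ and then $TM_n$ is non-total, contradicting the implicit condition (your condition~(a)) that outputs lie in $E_i$. Your closing remark that an elementary version would need a $\emptyset'$-relativized recursion theorem is thus slightly pessimistic: the paper avoids the oracle precisely by letting the ``deviation from $\phi_i$'' be divergence rather than a wrong value, which an unrelativized machine can arrange. Your explicit isolation of condition~(a) is a genuine improvement in clarity --- the paper uses it without stating it, and as you note the theorem is vacuous without it.
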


\begin{proof}
Assume the opposite: for an arbitrary $TM_i$, there is a $P_p$ such that for all $j$, $TM_i=TM_j$ iff there is an $x$ such that $P_p(x)=j$. Let $P_p(x)$ be computed by $\phi_p(x,t)$.

Given $i$ and $\phi_p$, define $TM_n(y)$ as follows. Run all $\phi_p(x,t)$ computations for $x\leq y$ and $t\leq y$ for up to $\Phi_p(x,t)\leq y$. If there is no case where $\phi_p(x,t)=n$ then $TM_n(y)=TM_i(y)$.

Otherwise, enumerate all cases $\langle x,t\rangle $ where $\phi_p(x,t)=n$ for $x\leq y$ and $t\leq y$ in time $\Phi_p(x,t)\leq y$. Let $y=\tau(m,q)$ and select the $m^{th}$ case $\langle z,s\rangle $ where $\phi_p(z,s)=n$. This ensures that each such case gets chosen an infinite number of times during the computations of all inputs $y$ to $TM_n(y)$. Dovetail the computations of all $\phi_p(z,r)$ for $r>s$ until a value $v$ is found where $\phi_p(z,v)=m$ and $m\neq n$. If none is found, then $TM_n(y)$ diverges. Otherwise, $TM_n(y)=TM_i(y)$.

If $P_p(y)$ never equals $n$ for all $y$, then $TM_n=TM_i$. Then $P_p$ cannot be a Complete Enumeration, since it missed $TM_n$.

If $P_p(y)=n$ for some $y$ then there is a value $s$ where $\phi_p(y,s)=n$. At this point $\phi_p(y,t)$ either diverges or $\phi_p(y,t)=n$ for all $t>s$. Let $TM_n(v)$ be a value $v$ where this value $\phi_p(y,s)=n$ is the case selected in the computation of $TM_n(v)$. By the construction of $TM_n$, $TM_n(v)$ diverges, and is therefore not a total function.

Since this construction is true for any $i$ and $\phi_p$, the Complete Enumeration of all Total Functions $TM_j$ equal to $TM_i$ is not Computable in the Limit by any $P_p$.
\end{proof}

\section{Conclusions}

The class of properties that are Computable in the Limit are an interesting extension of the effectively computable functions. Although they are not computable in a finite time, they model our everyday notion of what learning and generalization are. The restricted subset of problems termed ``Identification in the Limit'' have been extensively covered in the Machine Learning and Computational Learning Theory fields. But the formal properties of the class itself, outside of its use as a model for learning, is itself interesting. This paper serves as a start for the exploration of this class.

It is obvious that the substitution of $\mu(x)$ for $\lambda(x)$ in the Normal Form Theorem can be further extended to classes where the computing function changes its mind an infinite number of times, and there is some predicate that defines a sort of limiting condition of that sequence. This would make it possible to extend Theorem \ref{thmRational} to the reals. This extension would of course be another superset, with characteristics all its own.

\section{Acknowledgements}

I would like to acknowledge Patrick Boyle and David Buhanan for helpful discussions.

\bibliographystyle{plain}	
\bibliography{limit}	

\end{document}